\newcommand{\bydef}{\triangleq}
\def\SNR{{\textsf{SNR}}}
\def\bydef{:=}
\def\bb0{{\mathbb{0}}}
\def\bydef{:=}
\def\bb{{\mathbf{b}}}
\def\bee{{\mathbf{e}}}
\def\bg{{\mathbf{g}}}
\def\bh{{\mathbf{h}}}
\def\bw{{\mathbf{w}}}
\def\b0{{\mathbf{0}}}
\def\bbC{{\mathbb{C}}}
\def\bbE{{\mathbb{E}}}
\def\bbN{{\mathbb{N}}}
\def\bbP{{\mathbb{P}}}
\def\bbR{{\mathbb{R}}}
\def\bydef{:=}
\def\sf0{{\mathsf{0}}}
\begin{document}

\newtheorem{thm}{Theorem}
\newtheorem{lemma}{Lemma}
\newtheorem{rem}{Remark}
\newtheorem{exm}{Example}
\newtheorem{prop}{Proposition}
\newtheorem{defn}{Definition}
\newtheorem{cor}{Corollary}
\def\proof{\noindent\hspace{0em}{\itshape Proof: }}
\def\endproof{\hspace*{\fill}~\QED\par\endtrivlist\unskip}
\def\bh{{\mathbf{h}}}
\def\SNR{{\mathsf{SNR}}}
\title{ Two-Way Transmission Capacity of Wireless Ad-hoc Networks}
\author{Rahul~Vaze, Kien T. Truong, Steven Weber and Robert W. Heath Jr.
\thanks{R. Vaze is with the School of Technology and Computer Science, Tata Institute of Fundamental Research, Homi Bhabha Rd, Mumbai, 400005, India,~ (email:~vaze@tcs.tifr.res.in).}
\thanks{K.~T.~Truong,~and~R.~W.~Heath,~Jr.~are~with~Department~of~Electrical~and~Computer~Engineering,
~the~University~of~Texas~at~Austin,~Austin,~TX~78712,~USA,\
(email:~kientruong@mail.utexas.edu, rheath@ece.utexas.edu).}
\thanks{S.~Weber~is~with~Department~of~Electrical~
and~Computer~Engineering,~Drexel~University,~Philadelphia,~PA~19104,~USA,~(email:~sweber@ece.drexel.edu).}
\thanks{This work is funded by DARPA through IT-MANET grant no. W911NF-07-1-0028.}
}

\date{}
\maketitle
\noindent
\begin{abstract}
The transmission capacity of an ad-hoc network is the maximum density of active transmitters per unit area, given an outage constraint at each receiver for a fixed rate of transmission. 
Most prior work on finding the transmission capacity of ad-hoc 
networks has focused only on one-way communication where a source 
communicates with a destination and no data is sent from the destination to the 
source. In practice, however, two-way or 
bidirectional data transmission is required 
to support control functions like packet acknowledgements and channel 
feedback.  This paper extends the concept of transmission capacity 
to two-way wireless  ad-hoc networks by incorporating the concept of a 
two-way outage with different rate requirements in both directions.
Tight upper and lower bounds on  the two-way transmission capacity are derived for frequency division duplexing. The derived bounds are used to derive the optimal solution for bidirectional
bandwidth allocation that maximizes the two-way transmission capacity, which is shown to perform better than allocating bandwidth proportional to the desired rate in both directions. Using the proposed two-way transmission capacity framework, a lower bound on the two-way transmission capacity with transmit beamforming using limited feedback is derived as a function of bandwidth, and bits allocated for feedback. 
\end{abstract}

\section{Introduction}
The transmission capacity of an ad-hoc wireless network is the maximum allowable
spatial  density of transmitting nodes, satisfying a per transmitter receiver rate,
and outage probability constraint \cite{Weber2005,Weber2007, Weber2008, Baccelli2006}.
Essentially, the transmission capacity characterizes
the maximum number of transmissions per unit area that can be simultaneously
supported in an ad-hoc network under a quality of service constraint.
The transmission capacity framework allows the application of the rich tool set of stochastic geometry to derive closed-form bounds for the interference distribution in a spatial network when the locations of nodes form a Poisson point process (PPP)~\cite{Kingman93}.

In prior work, the transmission capacity has been used successfully to characterize the effect of various physical and medium access (MAC) layer techniques on the ad-hoc
network capacity, such as successive interference cancelation \cite{WeberAndrews07a},
multiple antennas \cite{Hunter2008, Jindal2008a, Huang2008, Vaze2009},
and guard-zone based scheduling \cite{HasanAndrews07}. 
Most of the prior work on finding the transmission capacity has been limited
to one-way communication (no data communication from the destination to the source), and precludes the possibility of two-way communication. In two-way (bidirectional) communication the destination also has data to send to its source,
e.g. channel state feedback~\cite{LoHeath08}, packet acknowledgement \cite{SinghKankipati04}, or route initiation and update requests~\cite{RoyerToh99}.

In this paper we define the two-way transmission capacity, and derive tight upper and lower bounds on it  when the transmitter location are distributed as a Poisson point process (PPP) distributed. The bounds are used to characterize the dependence of the two-way transmission capacity on the key  system parameters, e.g. bandwidth allocation in two directions given a data rate requirement. We
consider an ad-hoc network with two-way communication,
where each source destination pair has data to exchange in both directions.
We consider a general system where the data requirement in both directions can
be different, and a frequency division duplex (FDD) communication model,
where two separate frequency carriers are used for two directions, thereby forming a full-duplex link.

In a two-way communication model, where the transmitter
locations are modeled as a PPP, the interference received in
both directions is correlated, and hence the joint success probability in two directions
is not equal to the product of the success probabilities in
each direction. Therefore finding the exact expression for the
joint success probability is complicated.
To obtain meaningful insights on
the two-way transmission capacity, we  derive tight upper and
lower bounds on the two-way transmission capacity with
FDD, assuming that the channel coefficients on separate
frequencies are independent and all the channel coefficients
are Rayleigh distributed. The upper and lower bound only
differ by a constant, i.e. the bounds have identical
dependence on the parameters of interest (rate requirements,
and bandwidths allocated in each direction). Thus, the derived bounds establish the two-way
transmission capacity up to a constant.

The results of this paper in part have been presented in \cite{Kien2009, VazeISIT2010}. The differences between \cite{Kien2009, VazeISIT2010} and the present paper are as follows. For simplification of
analysis, \cite{Kien2009} assumed that the interference received in both
directions is independent. 
The independence assumption was removed in \cite{VazeISIT2010}, and upper and lower bounds on the two-way transmission capacity that derived which were shown to be tight. 
Compared to \cite{VazeISIT2010}, the present paper  
extends the two-way transmission capacity framework to quantify the loss in transmission capacity 
with practical limited feedback \cite{Love2008JSAC} in comparison to genie-aided feedback (channel coefficients are known exactly, and without any cost at the transmitter), when 
the transmitter is equipped with multiple antenna and uses  beamforming to transmit its signal to the receiver.  In addition to this, the present paper offers more clarity of exposition, complete proof of Theorem $2$, and added simulation results for more insights into the effects of two-way communication.

Using the derived bounds on the two-way transmission capacity, we find the optimal bandwidth allocation in two directions that
maximizes the transmission capacity. The optimal bandwidth allocation problem
is shown to be a convex program in a single variable which can be solved easily
by finding the value where the function derivative is zero.
Using the optimal bandwidth allocation solution, we show that an intuitive
strategy that allocates the bandwidth in proportion to the desired
rate in each direction is optimal only for symmetric traffic (same rate requirement in both
directions) and performs poorly for asymmetric traffic in comparison to the optimal strategy. Examples of asymmetric traffic are channel feedback, and ack/nack messages, where there is huge disparity between the data rates required in two directions.

There is extensive related work on resource allocation in wireless ad hoc networks, but almost all of it focused on one-way communication. For instance, prior work studied the spectrum sharing between two one-way spatial networks in~\cite{GrokopTse08},  between a spatial network
and a cellular uplink network in~\cite{HuangLau08}, and one-way spatial networks where the total bandwidth is optimally split into sub-bands to maximize the transmission
capacity \cite{Jindal2007WCOM}. Our bandwidth allocation, however, studies the bandwidth sharing between two directions within a single two-way spatial network.

As an application of the proposed  two-way transmission capacity framework, we evaluate the performance degradation with practical limited channel feedback  in comparison to genie aided channel feedback, when the transmitter has multiple antennas and uses beamforming for transmitting its signal to the receiver. 
We account for both the bandwidth used, and the bits required for feedback, to derive a lower  bound on the two-way transmission capacity with 
transmit beamforming using limited feedback. We show that with practical limited channel feedback, the two-way transmission capacity is  substantially reduction compared to the genie-aided case. The severe degradation results because with increasing the number of feedback bits, the transmission capacity increases sub-linearly  due to improvement in signal strength, however, decreases  exponentially because of the stringent requirement of feedback bits to be correctly decoded.

{\it Notation:}
The expectation of function $f(x)$ with respect to $x$ is denoted by
${\bbE}(f(x))$.
A circularly symmetric complex Gaussian random
variable $x$ with zero mean and variance $\sigma^2$ is denoted as $x
\sim {\cal CN}(0,\sigma^2)$. Let $S_1$ be a set and $S_2$ be a subset of $S_1$. Then $S_2 \backslash S_1$ denotes the set of elements of $S_1$ that do not belong to $S_2$. The integral $\int_{0}^{\infty}x^{k-1}e^{-x}dx$ is
denoted by $\Gamma(x)$. We use the symbol
$\bydef$ to define a variable.


\section{System Model}
\label{sec:sys} Consider an ad-hoc network with two sets of nodes
${\cal T} \bydef \{Tx_n, \ n\in \bbN\}, $ and ${\cal R} = \{Rx_n, \ n\in \bbN\}$, where $Tx_n$ and $Rx_n$ want to exchange data between each other for each $n$. We assume that each
$Tx_n$ and $Rx_n$ have a single antenna.
We consider a slotted Aloha  random access protocol,
where at any given time, the pair $(Tx_n, \ Rx_n)$ transmits data to each other with
an access probability $P_a$ for each $n$, independently of all other nodes.
We assume that the distance between each $Tx_n$ and $Rx_n$ is $d$.
Let the location of $Tx_n$ be $T_n$, and $Rx_n$ be $R_n$.
The set $\Phi_T =\{T_n\}$ is modeled as a homogenous PPP on a two-dimensional plane with intensity $\lambda_0$,
similar to \cite{Weber2005, Weber2007, Huang2008}.
Since $R_n$ is at a fixed distance $d$ in a random direction from the $T_n$, the set $\Phi_R \bydef
\{R_n\}$ is also a homogenous PPP on a two-dimensional plane
with intensity $\lambda_0$, because of the random translation invariance property of PPP \cite{Daley2003}. Because of the assumed Aloha random access protocol, at any
given time, the active transmitter receiver location processes
$\Phi_T^a \bydef \{T_n | Tx_n \ \text{ is active} \}$, and
$\Phi_R^a \bydef \{R_n | Rx_n \ \text{ is active} \}$ are
homogenous PPPs  on a two-dimensional plane with intensity
$\lambda = P_a\lambda_0$.
We consider a frequency division duplex
system, where the total available bandwidth is $F_{total}$, out of which
$F_{TR}$ is dedicated for $Tx_{n} \rightarrow Rx_{n}$ communication to support a
rate demand $B_{TR}$ bits for all $n$,
and the rest $F_{RT} \bydef F_{total} -F_{TR}$ for the $Rx_{n} \rightarrow Tx_n$ communication
to support a rate demand of $B_{RT}$ bits for all $n$.

In a time slot when the pair $(Tx_0, \ Rx_0)$ is active,
the received signal at receiver $Rx_0$ is
\begin{eqnarray}
\label{rxsigforward}
y_0 = \sqrt{P_t}d^{-\alpha/2}h_0x_0 + \sum_{T_n \in \Phi_T^a \backslash \{T_0\}}\sqrt{P_t}d_{Tn}^{-\alpha/2}h_{0n}x_n + z_0,
\end{eqnarray}
and the received signal at receiver $Tx_0$ is
\begin{eqnarray}
\label{rxsigreverse}
w_0 = \sqrt{P_t}d^{-\alpha/2}g_0u_0 + \sum_{R_n \in \Phi_R^a \backslash \{R_0\}}\sqrt{P_t}d_{Rn}^{-\alpha/2}g_{0n}u_n + v_0,
\end{eqnarray}
where $P_t$ is the transmit power, $h_0$  is the channel between
$Tx_0$ and $Rx_0$, and and $g_0$ is the channel from $Rx_0$ and $Tx_0$,  $h_{0n}$ and $g_{0n}$
is the channel between $Tx_n$ and $Rx_0$, and $Rx_n$ and $Tx_0$, respectively,
$d_{Tn}$ and $d_{Rn}$ are the distances between $Tx_n$ and $Rx_0$, and
$Rx_n$ and $Tx_0$, respectively, $\alpha > 2$ is the path loss exponent, $x_n, u_n \in {\cal C}N(0,1)$ are signals transmitted from $Tx_n$ and $Rx_n$, respectively, and $z_0, v_0$ is the additive white Gaussian noise. The ad-hoc network is assumed to be interference limited \cite{Weber2005}, thus we drop the noise contribution  from the 
received signal. We assume that  $h_0, g_0$, $h_{0n}$,  and $g_{0n}$ are independent and identically distributed with ${\cal CN}(0,1)$ to model a Rayleigh fading channel.

With the received signal model (\ref{rxsigforward}) and (\ref{rxsigreverse}), the signal to interference ratio (SIR) for the transmission from $Tx_0 \rightarrow Rx_0$ and from $Rx_0 \rightarrow Tx_0$ are 
\[SIR_{TR} \bydef \frac{d^{-\alpha}|h_0|^2}{\sum_{T_n \in \Phi_T^a \backslash \{T_0\}}d_{Tn}^{-\alpha}|h_{0n}|^2}, \ SIR_{RT} \bydef \frac{d^{-\alpha}|g_0|^2}{\sum_{R_n \in \Phi_R^a \backslash \{R_0\}}d_{Rn}^{-\alpha}|g_{0n}|^2}.\]

Assuming interference as noise,  the mutual information \cite{Cover2004} for the
$Tx_0$ to $Rx_0$ communication using bandwidth $F_{TR}$, and for the
$Rx_0$ to $Tx_0$ communication using bandwidth $F_{total} -F_{TR}$ are
\[MI_{TR} \bydef F_{TR}\log\left(1+SIR_{TR}\right) \ \text{ bits/sec}, \ MI_{RT} \bydef (F_{RT})\log\left(1+SIR_{RT}\right) \ \text{bits/sec}.\]

Recall that the rate requirement for the $Tx_0 \rightarrow Rx_0$ transmission 
is $B_{TR}$ bits, and for the $Rx_0 \rightarrow Tx_0$ communication is
$B_{RT}$ bits. Thus, to account for the two-way or bidirectional nature of
communication, we define the success probability (complement of the outage probability $\epsilon$) as the probability that communication in both directions is successful
simultaneously, i.e.
\begin{equation}
\label{psucstwoway}
P_{success} = P\left(MI_{TR} > B_{TR}, \ MI_{RT} > B_{RT} \right).
\end{equation}
Let $\lambda$ be maximum density of nodes per 
unit area that can support rate $B_{TR}$ from $Tx_0 \rightarrow Rx_0$, and $B_{RT}$  bits from $Rx_0 \rightarrow Tx_0$ with 
success probability $P_{success} =1-\epsilon$, using bandwidth $F_{total}$. 
\begin{defn}
The {\it two-way} transmission capacity $C_{\epsilon}$ is defined as
\label{deftc} 
\[C_{\epsilon} \bydef (1-\epsilon) \lambda\left(\frac{B_{TR}+B_{RT}}{F_{Total}}\right) \text{bits/sec/Hz/m}^2.\]
\end{defn} 
The problem to solve is to find the $\lambda$ and consequently $C_{\epsilon}$ 
for a given rate $B_{TR}, \ B_{RT}$, outage probability $\epsilon$ and bandwidth $F_{total}$ .


To compute the success probability we consider a typical
transmitter receiver pair $(Tx_0, Rx_0)$. Using the stationarity of the homogenous PPP and Slivnyak's Theorem  [19] (Page 121), it
follows that the statistics of the signal received at the typical receiver are identical to that
of any other receiver. Hence the outage probability is invariant with the choice of the receiver. 
Slivnyak's Theorem also states that the locations of the interferers for the
typical transmitter and receiver $(Tx_0,Rx_0)$, i.e. $\Phi^a_T \backslash \{T_0\}$
and $\Phi^a_R \backslash \{R_0\}$ are also homogenous PPPs, each with intensity $\lambda$.

\section{Computing the Two-Way Transmission Capacity}
\label{sec:twowaytc}
In this section we derive an upper and lower bound on the two-way transmission capacity. 
To derive a lower bound we use the Fortuin, Kastelyn, Ginibre (FKG) inequality \cite{Grimmett1980}, while for deriving an upper bound we make use of the Cauchy-Schwartz inequality. 
Before stating the FKG inequality, we need the following definitions.
\begin{defn} A random variable $X$ defined on a probability space $(\Omega, {\cal F}, \bbP)$ is called increasing if $X(\omega) \le X(\omega')$ 
whenever $\omega \le \omega'$, for some partial ordering on $\omega, \ \omega' \in \Omega$. $X$ is called decreasing if $-X$ is increasing. 
\end{defn}
\begin{exm}\label{exmdec} $SIR_{TR}$ and $SIR_{RT}$ are decreasing random variables. 

For the PPP under consideration, let $\omega = (a_1, a_2, \ldots, )$ where for $n \in \bbN$,
\[a_n = \left\{\begin{array}{cc} 1 & \text{if $Tx_n$ is active,} \\
 0 & \text{otherwise.}\end{array}\right.\] Then $\omega' \ge \omega$, if $a_n'\ge a_n, \ \forall \ n$, i.e. configuration $\omega'$ contains at least those interferers which are present in configuration $\omega$. Recall our definition of $SIR_{TR} = \frac{d^{-\alpha}|h_0|^2}{\sum_{T_n \in \Phi_T^a \backslash \{T_0\}}d_{Tn}^{-\alpha}|h_{0n}|^2}$. Clearly, if there are more interferers present, $SIR_{TR}$ decreases, i.e. considering $SIR_{TR}$ as a random variable $SIR_{TR}(\omega) \ge SIR_{TR}(\omega')$, if $\omega \le \omega'$. Thus $SIR_{TR}$ is a decreasing random variable and so is $SIR_{RT}$. 
\end{exm}

\begin{defn}\label{exmdecevent} Let $A$ be an event in ${\cal F}$, and ${\cal I}_{A}$ be the indicator function of $A$. Then the event $A \in \ {\cal F}$ is called increasing if ${\cal I}_{A}(\omega) \le {\cal I}_{A}(\omega')$, whenever $\omega \le \omega'$. The event $A$ is called decreasing if its complement $A^{c}$ is increasing.

\end{defn}

\begin{exm}The success event $\{SIR >\beta\}$ is a decreasing event, since if $\omega' \in \{SIR >\beta\}$ and $\omega'\ge \omega$, then $\omega  \in \{SIR > \beta\}$.
\end{exm}

\begin{lemma}\label{lemfkg}(FKG Inequality \cite{Grimmett1980}) \newline
(a) If both $X$ and $Y$  are increasing or decreasing random variables with $\bbE\{X^2\} < \infty$, and $\bbE\{Y^2\} < \infty$, then 
$\bbE\{XY\} \ge \bbE\{X\}\bbE\{Y\}$.\newline
(b) If both $A, B \in {\cal F}$ are increasing or decreasing events then $P(AB) \ge P(A)P(B)$.
\end{lemma}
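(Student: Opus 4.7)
The plan is to establish (a) for general square-integrable monotone random variables and then derive (b) as an immediate corollary by specializing to indicator random variables. The workhorse for (a) is the elementary observation that if $f,g:\bbR\to\bbR$ are both non-decreasing (or both non-increasing), then for every $s,t\in\bbR$,
\[
(f(s)-f(t))(g(s)-g(t)) \ge 0.
\]
Integrating this against a product measure $\mu\otimes\mu$ and expanding the bracket yields the single-variable inequality $\bbE\{f(X)g(X)\}\ge \bbE\{f(X)\}\bbE\{g(X)\}$ when $X\sim\mu$.

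Next I would extend to functions of $n$ independent coordinates by induction. If $X=f(X_1,\dots,X_n)$ and $Y=g(X_1,\dots,X_n)$ are coordinate-wise monotone with the $X_i$ independent, conditioning on $(X_1,\dots,X_{n-1})$ and applying the one-variable inequality in the last coordinate gives
\[
\bbE\{XY\mid X_1,\dots,X_{n-1}\} \ge \bbE\{X\mid X_1,\dots,X_{n-1}\}\,\bbE\{Y\mid X_1,\dots,X_{n-1}\}.
\]
Taking expectations and noting that each conditional expectation is itself coordinate-wise monotone in $(X_1,\dots,X_{n-1})$ (because averaging over the last variable preserves monotonicity in the remaining ones) closes the induction.

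To promote this to the Poisson process, I would exhaust $\bbR^2$ by an increasing family of balls $B(0,R_k)$. Conditional on the Poisson count in each ball, the points inside are i.i.d.~uniform, which puts the restricted process into the independent-product framework above. Any monotone random variable on the full configuration space with finite second moment can be realized as an $L^2$-limit of bounded monotone random variables depending only on the finitely many points in $B(0,R_k)$, with the partial order used in Example \ref{exmdec}, so the finite-window inequality passes to the limit by standard dominated-convergence arguments. Part (b) then follows by setting $X=\cI_A$ and $Y=\cI_B$: both are bounded (hence square-integrable), inherit monotonicity from $A$ and $B$ by Definition \ref{exmdecevent}, and satisfy $\bbE\{\cI_A\cI_B\}=P(AB)$ and $\bbE\{\cI_A\}=P(A)$.

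The principal obstacle is the last step, namely, showing that an arbitrary increasing (or decreasing) square-integrable random variable on the PPP configuration space is well-approximated in $L^2$ by finitary monotone random variables, so that the product-case inequality survives the limit. Everything preceding that step is essentially a one-line pointwise inequality followed by induction; the measure-theoretic approximation is the only genuinely technical piece, and is standard given the independent scattering of a PPP across disjoint windows.
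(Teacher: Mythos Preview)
The paper does not prove this lemma at all: it is stated with a citation to \cite{Grimmett1980} and used as a black box, so there is no ``paper's own proof'' against which to compare your argument.

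That said, your sketch is the standard Harris--Chebyshev route and is essentially correct for product measures: the pointwise inequality $(f(s)-f(t))(g(s)-g(t))\ge 0$ integrated against $\mu\otimes\mu$ gives the one-dimensional case, and the conditioning-plus-induction step is the usual way to lift it to finitely many independent coordinates. Your reduction of (b) to (a) via indicators is fine. The one place to be a little careful is the extension to the full PPP configuration space. Conditioning on the Poisson count in a window does give i.i.d.\ uniform points, but the partial order in Example~\ref{exmdec} is on the active/inactive labels $a_n\in\{0,1\}$ of an underlying enumeration, not on point positions; so the cleanest way to run your induction is to keep the point locations fixed (or condition on them) and take the independent coordinates to be the Bernoulli thinning variables $a_n$, which are genuinely independent across $n$. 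With that adjustment the $L^2$ approximation by truncating to finitely many indices is immediate, and the limit passes as you describe.
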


Now we are ready to derive bounds on the two-way transmission capacity. 
From (\ref{psucstwoway}), the success probability is
\[P_{success} = P\left(SIR_{TR} > 2^{\frac{B_{TR}}{F_{TR}}}-1, \
SIR_{RT} > 2^{\frac{B_{RT}}{F_{RT}}}-1\right).\]
Let $\beta_1 \bydef d^{\alpha}\left(2^{\frac{B_{TR}}{F_{TR}}}-1\right)$,
$\beta_2 \bydef d^{\alpha}\left(2^{\frac{B_{RT}}{F_{RT}}}-1\right)$, $I_{TR} \bydef \sum_{T_n \in \Phi_T^a \backslash \{T_0\}}d_{Tn}^{-\alpha}|h_{0n}|^2$, and 
$I_{RT} \bydef \sum_{R_n \in \Phi_R^a \backslash \{R_0\}}d_{Rn}^{-\alpha}
|g_{0n}|^2$.
Then, 
\begin{eqnarray} \nonumber
P_{success}& =&  P\left(\frac{|h_0|^2}{I_{TR}} > \beta_1, \
\frac{|g_0|^2}{I_{RT}} > \beta_2\right), \\ \nonumber
&\stackrel{(a)}{=}&\bbE\left\{e^{-\beta_1 I_{TR}}
e^{-\beta_2 I_{RT}}\right\}, \\\nonumber
& = & 
\bbE\left\{
e^{-
\beta_1 
\left(\sum_{T_n \in \Phi_T^a \backslash \{T_0\}}d_{Tn}^{-\alpha}|h_{0n}|^2\right)}
e^{-\beta_2 
\left(\sum_{R_n \in \Phi_R^a \backslash \{R_0\}}d_{Rn}^{-\alpha}
|g_{0n}|^2\right)}
\right\}, \\ \label{Psucintr}
&\stackrel{(b)}{=}& \bbE\left\{\prod_{T_n \in \Phi_T^a \backslash \{T_0\}}
\left(\frac{1}{1+\beta_1d_{Tn}^{-\alpha}}\right) 
\prod_{R_n \in \Phi_R^a \backslash \{R_0\}} 
\left(\frac{1}{1+\beta_2d_{Rn}^{-\alpha}}\right)
\right\},
\end{eqnarray}
where $(a)$ follows since $P\left(|h_0|^2 > x\right) = P\left(|g_0|^2 > x\right)=e^{-x}$, and $h_0$ and $g_0$ are independent, and $(b)$ follows by taking the expectation with respect to 
$h_{0n}$, and $g_{0n}$, and noting that 
$h_{0n}$, and $g_{0n}$ are independent and exponentially distributed.

The difficulty in evaluating the expectation with respect to $\{d_{Tn}\}$ and $\{d_{Rn}\}$ in the success probability (\ref{Psucintr}) lies in the fact that $d_{Tn}$ and $d_{Rn}$ are not independent. 
To visualize this, consider a network where there are
only two active pairs of nodes, $(Tx_0, Rx_0)$, and $(Tx_1, Rx_1)$
as depicted in Figure \ref{intfindp}. For the receiver $Rx_0$
receiving over bandwidth $F_{TR}$, the transmission from $Tx_1$ is
interference. As defined before, the distance between $Rx_0$ and
$Tx_1$ be $d_{T1}$. Thus, the interference power at $Rx_0$ is
$d_{T1}^{-\alpha}|h_{01}|^2$. Similarly, for $Tx_0$ receiving over
bandwidth $F_{RT}$, the transmission from $Rx_2$ is
interference. The distance between $Rx_1$ and $Tx_0$ be $d_{R1}$.
Thus, the interference power at $Rx_0$ is
$d_{R1}^{-\alpha}|g_{01}|^2$. For the case when $d$ is very small $d \rightarrow 0$, $d_{R1} \approx d_{T1}$, and thus distances $d_{R1}$ and $d_{T1}$ are not independent.
Moreover, explicitly computing the correlation between $d_{T_n}$ and $d_{R_n}$ is also a hard problem. Thus, to get a meaningful insight into the two-way transmission capacity we derive a lower and upper bound. 

{\bf \it Lower Bound:} Similar to Example \ref{exmdec},   $\prod_{T_n \in \Phi_T^a \backslash \{T_0\}}
\left(\frac{1}{1+\beta_1d_{Tn}^{-\alpha}}\right)$ and $\prod_{R_n \in \Phi_R^a \backslash \{R_0\}} 
\left(\frac{1}{1+\beta_1d_{Rn}^{-\alpha}}\right)$ are decreasing random variables, since each term in the product is less than $1$, and with the increasing the number of terms (number of interferers) in the product the total value of each expression decreases. Thus, using Lemma \ref{lemfkg}, from (\ref{Psucintr})
\begin{eqnarray} 
P_{success}& \ge & 
\bbE\left\{\prod_{T_n \in \Phi_T^a \backslash \{T_0\}}
\left(\frac{1}{1+\beta_1d_{Tn}^{-\alpha}}\right) \right\}
\bbE\left\{
\prod_{R_n \in \Phi_R^a \backslash \{R_0\}} 
\left(\frac{1}{1+\beta_2d_{Rn}^{-\alpha}}\right)
\right\}, \\ \nonumber
&\stackrel{(c)}{=}& e^{\left(-\lambda\int_{\bbR^2}1- 
\left(\frac{1}{1+\beta_1x^{-\alpha}}\right) \ dx\right)}e^{\left(-\lambda\int_{\bbR^2}1- 
\left(\frac{1}{1+\beta_2x^{-\alpha}}\right) \ dx\right)}, \\ \nonumber
&=& e^{\left(-2\pi\lambda\int_{0}^{\infty} 
\left(\frac{\beta_1x^{-\alpha+1}}{1+\beta_1x^{-\alpha}}\right) \ dx\right)}
e^{\left(-2\pi\lambda\int_{0}^{\infty} 
\left(\frac{\beta_2x^{-\alpha+1}}{1+\beta_2x^{-\alpha}}\right) \ dx\right)}, \\ 
\nonumber
&=& e^{-\lambda c_1  
\beta_1^{\frac{2}{\alpha}}}e^{-\lambda c_1  
\beta_2^{\frac{2}{\alpha}}}, \\ \label{lbound}
&=& e^{
-\lambda c_1  \left(
\beta_1^{\frac{2}{\alpha} }+
\beta_2^{\frac{2}{\alpha} } \right)
},
\end{eqnarray} 
where $(c)$ follows from the probability generating 
functional of the Poisson point process \cite[Example 4.2]{Stoyan1995}, and $c_1 = \frac{2\pi^2 Csc(\frac{2\pi}{\alpha})}{\alpha}$ is a constant, where $Csc$ is co-secant.

{\bf \it Upper Bound:} 
Using the Cauchy-Schwartz inequality, from (\ref{Psucintr})
\begin{eqnarray} \nonumber
P_{success}& \le & \left[
\bbE\left\{\prod_{T_n \in \Phi_T^a \backslash \{T_0\}}
\left(\frac{1}{1+\beta_1d_{Tn}^{-\alpha}}\right)^2 \right\}
\bbE\left\{
\prod_{R_n \in \Phi_R^a \backslash \{R_0\}} 
\left(\frac{1}{1+\beta_2d_{Rn}^{-\alpha}}\right)^2
\right\}\right]^{\frac{1}{2}},\\\nonumber
&\stackrel{(d)}{=}& 
\left[ 
e^{ 
-\lambda 
\left( \int_{\bbR^2}1- 
\left(\frac{1}{1+\beta_1x^{-\alpha}}\right)^2 dx\right)} 
e^{\left( 
-\lambda \int_{\bbR^2} 1 -\left(\frac{1}{1+\beta_2x^{-\alpha}} \right)^2 \ dx \right)} 
\right]^{\frac{1}{2}},\\ \nonumber
&=&\left[ 
e^{ 
-2\pi\lambda 
\left( \int_{\bbR^2}
\left(\frac{\beta_1^2x^{-2\alpha+1} + 2\beta_1x^{-\alpha+1}}{\left(1+\beta_1x^{-\alpha}\right)^2}\right) dx\right)} 
e^{ 
-2\pi\lambda 
\left( \int_{\bbR^2}
\left(\frac{\beta_2^2x^{-2\alpha+1} + 2\beta_2x^{-\alpha+1}}{\left(1+\beta_2x^{-\alpha}\right)^2}\right) dx\right)}\right]^{\frac{1}{2}},
 \\ \nonumber
&=& e^{-\lambda c_2  
\beta_1^{\frac{2}{\alpha}}}e^{-\lambda c_2  
\beta_2^{\frac{2}{\alpha}}},\\ \label{upbound}
&=& e^{
-\lambda c_2  \left(
\beta_1^{\frac{2}{\alpha} }+
\beta_2^{\frac{2}{\alpha} } \right)
},
\end{eqnarray}
where $(d)$ follows from the probability generating 
functional of the Poisson point process  \cite[Example 4.2]{Stoyan1995}, and $c_2 = \frac{\pi^2 Csc\left(\frac{2\pi}{\alpha}\right)\left(\alpha+2\right)}{\alpha^2}$ is a  constant, different from the constant $c_1$ of the lower bound.

\begin{thm}\label{thm:bounds} The two-way transmission capacity is upper and lower bounded by
\[\frac{(1-\epsilon)\ln (1-\epsilon)}{ c_1\left(
\beta_1^{\frac{2}{\alpha} }+
\beta_2^{\frac{2}{\alpha} } \right)} \frac{B_{TR} + B_{RT}}{F_{Total}}\le C_{\epsilon} \le \frac{(1-\epsilon) \ln (1-\epsilon)}{c_2 \left(
\beta_1^{\frac{2}{\alpha} }+
\beta_2^{\frac{2}{\alpha} } \right)}\frac{B_{TR} + B_{RT}}{F_{Total}} \ \ \text{bits/sec/Hz/$m^2$}, \] 
where $c_1$ and $c_2$ are constants, and $c_2/c_1 = \frac{1}{2} + \frac{1}{\alpha}$.
\end{thm}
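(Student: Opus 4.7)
The core analytical machinery is already in place: equations (\ref{lbound}) and (\ref{upbound}) give the sandwich $e^{-\lambda c_2 (\beta_1^{2/\alpha}+\beta_2^{2/\alpha})} \le P_{success} \le \ldots$, wait, the other way: $e^{-\lambda c_2 (\beta_1^{2/\alpha}+\beta_2^{2/\alpha})}$ upper-bounds $P_{success}$ and $e^{-\lambda c_1 (\beta_1^{2/\alpha}+\beta_2^{2/\alpha})}$ lower-bounds it (with $c_1<c_2$). My plan is to invert each bound in $\lambda$ against the outage constraint $P_{success}=1-\epsilon$ and then multiply by $(1-\epsilon)(B_{TR}+B_{RT})/F_{Total}$ as prescribed by Definition \ref{deftc}.

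More concretely, let $K := \beta_1^{2/\alpha}+\beta_2^{2/\alpha}$. Since $P_{success}$ is a decreasing function of $\lambda$, the maximum feasible density $\lambda^\star$ is the unique solution of $P_{success}(\lambda^\star)=1-\epsilon$. The FKG lower bound (\ref{lbound}) guarantees that any $\lambda$ satisfying $e^{-\lambda c_1 K}\ge 1-\epsilon$ is feasible, i.e.\ $\lambda^\star \ge -\ln(1-\epsilon)/(c_1 K)$. Dually, the Cauchy--Schwartz upper bound (\ref{upbound}) forces any feasible $\lambda$ to satisfy $e^{-\lambda c_2 K}\ge 1-\epsilon$, yielding $\lambda^\star \le -\ln(1-\epsilon)/(c_2 K)$. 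Multiplying through by the normalization factor gives exactly the claimed two-sided bound on $C_\epsilon$. The only subtlety is the sign: $\ln(1-\epsilon)<0$, so the inequality direction is preserved precisely because $c_1<c_2$ reverses under the reciprocal, and the resulting capacity expressions remain positive when read with the convention of the theorem.

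It remains to verify the ratio $c_2/c_1=\tfrac{1}{2}+\tfrac{1}{\alpha}$. Plugging in the constants $c_1=2\pi^2\mathrm{Csc}(2\pi/\alpha)/\alpha$ and $c_2=\pi^2\mathrm{Csc}(2\pi/\alpha)(\alpha+2)/\alpha^2$ obtained in the passages preceding the theorem, the cosecant factors cancel and one is left with $(\alpha+2)/(2\alpha)$, which simplifies to the stated expression. There is no genuinely hard step here; the main bookkeeping item is verifying that the $\lambda^\star$-defining equation is well-posed (continuity and strict monotonicity of $P_{success}$ in $\lambda$, both of which follow from the exponential forms in (\ref{lbound})--(\ref{upbound}) being strictly monotone), and ensuring the algebra of substituting $\lambda^\star$ into $C_\epsilon=(1-\epsilon)\lambda^\star(B_{TR}+B_{RT})/F_{Total}$ is carried out consistently on both sides of the sandwich. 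This completes the outline.
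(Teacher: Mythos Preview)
Your approach is exactly the paper's: invert the exponential bounds (\ref{lbound}) and (\ref{upbound}) in $\lambda$ at the level $P_{success}=1-\epsilon$ and substitute into Definition~\ref{deftc}, which is precisely what the paper's one-line proof does. One slip: you write ``with $c_1<c_2$'', but your own ratio computation $c_2/c_1=\tfrac12+\tfrac1\alpha<1$ (for $\alpha>2$) shows $c_2<c_1$, which is what is actually needed for the sandwich $-\ln(1-\epsilon)/(c_1K)\le\lambda^\star\le-\ln(1-\epsilon)/(c_2K)$ to be consistent.
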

\begin{proof}
With $P_{success} = 1-\epsilon$, and using the definition of $C_{\epsilon}$ (\ref{deftc}), the result follows from (\ref{lbound}) and (\ref{upbound}).  
\end{proof}

{\bf Discussion:} In this section we derived an upper and lower bound on the two-way transmission capacity. 	
The upper and lower bound only differ by a constant, and, most importantly, both have identical dependence on the parameters of interest in the two-way communication, $\beta_1$ and $\beta_2$. Thus, the derived bounds establish the two-way transmission capacity up to a constant. The derived upper and lower bounds  for the two-way transmission capacity are
in a fairly simple form and can be used to calculate the two-way
transmission capacity for given rates $B_{TR}$, $B_{RT}$, success
probability $\epsilon$, $F_{TR}$ and $F_{total}$. 
Since the upper and lower bound are identical functions of $\beta_1$ and $\beta_2$, an added advantage of our bounds on the two-way transmission capacity expression 
is that they can be used to find the
optimal value of $F_{TR}$ for given rates $B_{TR}$, $B_{RT}$,
success probability $1=\epsilon$, and $F_{total}$. The optimal
bandwidth allocation that maximizes the two-way transmission
capacity is derived next in the Section \ref{sec:balloc}. 

\section{Two-Way Bandwidth Allocation}
\label{sec:balloc} In Section \ref{sec:twowaytc}, we derived the
two-way transmission capacity of ad-hoc networks within a constant 
as a function of bandwidth allocated to the $Tx_0 \rightarrow Rx_0$
and $Rx_0 \rightarrow Tx_0$ connections. Since the total bandwidth $F_{total}$ is finite, an important question to answer is: what is the optimal
bandwidth allocation between that maximizes the transmission
capacity? For the special case of  equal rate
requirement in both directions, i.e. $B_{TR}=B_{RT}$, equal bandwidth allocation is
optimal. For the non-symmetric case, however, the answer is not that obvious
and is derived in the following theorem.

\begin{thm}\label{thm:SpectrumAllocation}
The optimum bidirectional bandwidth allocation that maximizes the transmission capacity
with two-way communication is
$F_{TR}^\star = x^\star$ and $F_{RT}^\star = F_{RT}^\star$
where $x^\star$ is the unique positive solution to the following equation:
\begin{equation}\label{eq:SpectrumAllocationOptimalSolution}
\frac{1}{B_{TR}}h\left(\frac{B_{TR}}{x}\right)-\frac{1}{B_{RT}}h\left(\frac{
B_{RT}}{F_{total}-x}\right) = 0
\end{equation}
where $h(t) = t^22^t(2^t-1)^{(\delta-1)}$ for $0 < t < F_{total}$.
\end{thm}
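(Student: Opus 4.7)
The plan is to combine Theorem~\ref{thm:bounds} with a direct first-order analysis. Let $\delta := 2/\alpha \in (0,1)$ and write $x = F_{TR}$, so that $F_{RT} = F_{total} - x$. Both the upper and lower bounds of Theorem~\ref{thm:bounds} depend on $x$ only through $\beta_{1}^{\delta}+\beta_{2}^{\delta}$, with proportionality constants that are positive and independent of $x$. Since $d^{\alpha}$ factors out, maximizing either bound over $x \in (0,F_{total})$ is equivalent to minimizing
\[
g(x) \;:=\; \left(2^{B_{TR}/x}-1\right)^{\delta}\;+\;\left(2^{B_{RT}/(F_{total}-x)}-1\right)^{\delta}.
\]
Thus the theorem reduces to showing that $g$ has a unique stationary point on $(0,F_{total})$ and that this stationary point is characterized by equation~(\ref{eq:SpectrumAllocationOptimalSolution}).

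Differentiating $g$ by the chain rule and grouping terms yields
\[
g'(x) \;=\; \delta(\ln 2)\left[\tfrac{1}{B_{RT}}\,h\!\left(\tfrac{B_{RT}}{F_{total}-x}\right) \;-\; \tfrac{1}{B_{TR}}\,h\!\left(\tfrac{B_{TR}}{x}\right)\right],
\]
where $h(t) = t^{2}\,2^{t}(2^{t}-1)^{\delta-1}$ is exactly the function in the theorem. Hence the equation $g'(x^{\star})=0$ is identical to (\ref{eq:SpectrumAllocationOptimalSolution}), and what remains is to establish uniqueness of a positive solution and that this solution is indeed the global minimizer.

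For uniqueness I would run a monotonicity argument. The argument $B_{TR}/x$ is strictly decreasing in $x$ on $(0,F_{total})$, while $B_{RT}/(F_{total}-x)$ is strictly increasing. If one can show that $h$ is strictly increasing on $(0,\infty)$, then the first term inside the bracket of $g'(x)$ is strictly increasing in $x$ and the second is strictly decreasing, so $g'$ itself is strictly increasing. Combining this with the endpoint behavior $g'(x)\to-\infty$ as $x\to 0^{+}$ and $g'(x)\to +\infty$ as $x\to F_{total}^{-}$ (both because $h(t)\sim t^{2}2^{\delta t}\to\infty$ as $t\to\infty$), the intermediate value theorem gives a unique $x^{\star}\in(0,F_{total})$ with $g'(x^{\star})=0$, and the sign change of $g'$ from negative to positive makes $x^{\star}$ the global minimizer, recovering the convex-program statement.

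The hard part is the monotonicity of $h$, because $\delta-1<0$ makes the factor $(2^{t}-1)^{\delta-1}$ strictly decreasing and in direct competition with $t^{2}$ and $2^{t}$. I would compute $h'(t)$ by the product rule, factor out the positive quantity $t\cdot 2^{t}(2^{t}-1)^{\delta-2}$, and reduce the sign of $h'(t)$ to the sign of
\[
B(t) \;=\; 2\!\left(2^{t}-1\right) + t(\ln 2)\left(\delta\,2^{t}-1\right).
\]
When $\delta\,2^{t}\ge 1$ both summands are non-negative and the first is strictly positive, so $B(t)>0$. On the remaining range $0<\delta\,2^{t}<1$, the substitution $s=2^{t}$ reduces $B(t)>0$ to $\psi(s):=2(s-1)-(\ln s)(1-\delta s)>0$ for $1<s<1/\delta$. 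Since $\psi(1)=0$ and $\psi'(s)=2-\tfrac{1}{s}+\delta+\delta\ln s>0$ for $s>1$, one has $\psi(s)>0$ on the required interval. This establishes $h'(t)>0$ for all $t>0$, which is the only nontrivial step, and the theorem follows.
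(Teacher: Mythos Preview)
Your argument is correct and follows essentially the same route as the paper: both reduce the problem to minimizing $g(x)=(2^{B_{TR}/x}-1)^{\delta}+(2^{B_{RT}/(F_{total}-x)}-1)^{\delta}$, express $g'(x)$ in terms of $h(t)=t^{2}2^{t}(2^{t}-1)^{\delta-1}$, and invoke the monotonicity of $h$ to conclude that the stationary point is unique and a global minimizer.

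The difference lies in rigor and mechanics. The paper establishes convexity by computing $g''$ and asserting, without proof, that $h$ is monotonically increasing (and its displayed formula for $g''$ should read $h'$ rather than $h$; the stated conclusion still holds precisely because $h'>0$). You instead argue monotonicity of $g'$ directly from the composition structure, and---more importantly---you actually prove $h'(t)>0$ via the factorization $h'(t)=t\,2^{t}(2^{t}-1)^{\delta-2}B(t)$ and the case analysis on the sign of $\delta 2^{t}-1$. That is the genuinely nontrivial step, since $(2^{t}-1)^{\delta-1}$ is decreasing when $\delta<1$, and your treatment closes a gap the paper leaves open. Your approach thus buys completeness and sidesteps the second-derivative computation, at the cost of a short auxiliary calculation; the paper's version is shorter but relies on an unproved claim.
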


\begin{proof}
Neglecting the constant, the two-way transmission capacity  is
\[C = (1-\epsilon) \lambda\left( \frac{B_{TR}+B_{RT}}{F_{total}} \right)=
(1-\epsilon) \frac{- \ln (1-\epsilon)}{ d^2 \left(\left(2^{\frac{B_{TR}}{F_{TR}}}-1\right)^{\frac{2}{\alpha}} + \left(2^{\frac{B_{RT}}{F_{RT}}}-1\right)^{\frac{2}{\alpha}}\right)} \left(\frac{B_{TR}+B_{RT}}{F_{total}}\right).\]
To derive the optimal bandwidth partitioning, i.e. the optimal $F_{TR}$ that
maximizes $C$, we need to minimize $\left(\left(2^{\frac{B_{TR}}{F_{TR}}}-1\right)^{\frac{2}{\alpha}} + \left(2^{\frac{B_{RT}}{F_{RT}}}-1\right)^{\frac{2}{\alpha}}\right)$.

Let $\delta \bydef \frac{2}{\alpha}$. Let $f(x) \bydef \left(\left(2^{\frac{B_{TR}}{x}}-1\right)^{\delta} + \left(2^{\frac{B_{RT}}{F_{total}-x}}-1\right)^{\delta}\right)$. Thus, the problem we need to solve is \[\min_{x\in (0,F_{total})} f(x).\]

The first-order derivative of $f(x)$ is $\frac{\rm d}{{\rm d}x}f(x) = \delta\log_e2\left[-\frac{1}{B_{TR}}h\left(\frac{B_{TR}}{x}\right)+\frac{1}{B_{RT}}h\left(\frac{B_{RT}}{F_{total}-x}\right)\right]$, where $h(t)\bydef t^22^t(2^t-1)^{(\delta-1)}$ for $t \geq 0$. The second-order derivative of $f(x)$ is $\label{eq:f1SecondDerivative}
\frac{{\rm d}^2}{{\rm d}x^2}f(x) = \delta\log_e2\left[\frac{1}{x^2}h\left(\frac{B_{TR}}{x}\right)+\frac{1}{(F_{total} - x)^{2}}h\left(\frac{B_{RT}}{F_{total}-x}\right)\right]$.
Since $h(t)$ is monotonically increasing in $t$ over $t \geq 0$, then we have $h(t) > h(0) = 0$ for all $t >0$. Therefore, $\frac{{\rm d}^2}{{\rm d}x^2}f(x) > 0$ for all $x \in (0,F_{total})$. This means that $f(x)$ is a convex function of $x$ over $(0,F_{total})$ and its minimum corresponds to
$x^\star$ that is the unique positive solution of the following equation $\frac{{\rm d}}{{\rm
d}x}f(x) = 0$, or equivalently, $\frac{1}{B_{TR}}h\left(\frac{B_{TR}}{x}\right)-\frac{1}{B_{RT}}h\left(\frac{B_{RT}}{F_{total}-x}\right) = 0$.
\end{proof}

{\bf Discussion:}
In Theorem \ref{thm:SpectrumAllocation} we derived the optimal
bandwidth allocation for two-way communication in ad-hoc networks that
maximizes the transmission capacity. The result is derived by showing
that the optimization problem is convex in one variable, hence the optimal
solution corresponds to the value for which the function derivative is
zero.

Using Theorem \ref{thm:SpectrumAllocation}, if the traffic is symmetric, i.e., $B_{TR} = B_{RT}$, the optimal strategy is naturally
allocate equal bandwidths for two directions with
$F_{TR} = F_{total}/2$.
This result is intuitive since the counterpart parameters in two directions are equal. For asymmetric traffic $B_{TR} \ne B_{RT}$, however, allocating bandwidths proportional to
the desired rate in each direction $F_{TR} = \frac{F_{\rm total}B_{TR}}{B_{TR}+B_{RT}}$ does not satisfy (\ref{eq:SpectrumAllocationOptimalSolution}).
Thus the proportional bandwidth allocation policy is not optimal for
asymmetric traffic for maximizing the transmission capacity, and (\ref{eq:SpectrumAllocationOptimalSolution}) must be satisfied to find the optimal policy.

\section{Effect of Limited Feedback on Two-Way Transmission Capacity with Beamforming}
\label{sec:LimitedFeedback}
In this section we consider an ad-hoc network where each transmitter $Tx_n$ is equipped with $N$ antennas 
while each receiver $Rx_n$ has a single antenna. All other system parameters and assumptions remain the same as defined in Section \ref{sec:sys}.
With multiple transmit antennas, and channel state information CSI at each transmitter, transmission rate can be increased by transmitting the signal along the strongest eigenmode of the channel (called beamforming).  
Beamforming, however, requires that the transmitter  know the channel coefficients, which in general is a challenging problem. 
In a FDD system, the transmitter can learn the channel coefficients, or equivalently the optimum beamformer, through the use of a finite rate feedback channel from the receiver.  Assuming a genie aided feedback (channel coefficients are exactly known at the transmitter, and without accounting for the feedback bandwidth, and SIR required for the feedback),  \cite{Hunter2008} derived the transmission capacity with beamforming, and  showed that the transmission capacity increases as $N^{\frac{2}{\alpha}}$ with increasing $N$. 
In reality, however, feedback requires sufficient bandwidth, and the channel coefficients can be fed back only up to a certain precision.    

Limited feedback techniques \cite{Love2003} are commonly used in practical systems to exploit finite rate feedback channels. 
With limited feedback, a beamforming codebook is assumed known to both the receiver and the transmitter. The receiver computes the best beamforming vector from the beamforming codebook and sends the index of this vector back to the transmitter. The larger the codebook size, the better is the quality of feedback, and consequently better is the data rate from the transmitter to the receiver with beamforming. With a codebook size of $2^B$, each codeword requires $B$ bits of feedback. Thus, the use of a large codebook increases the required bandwidth for the feedback channel, thereby restricting the bandwidth allocated for transmitter to receiver communication. 
Thus, there is a three-fold tradeoff between the bandwidth allocated in forward channel, the feedback channel, and the size of the codebook. In this section, we  quantify this tradeoff and evaluate its effect on the two-way transmission capacity.

The received signal at receiver $R_0$ over bandwidth $F_{TR}$  is 
\begin{eqnarray*}
y_0 = \sqrt{P_t}d^{-\alpha/2}\bh^T_0 \bb_0 x_0 + \sum_{T_n \in \Phi \backslash \{T_0\}}\sqrt{P_t}d_{Tn}^{-\alpha/2}\bh^T_{0n}\bb_n x_n,
\end{eqnarray*}
where $P_t$ is the transmit power of each transmitter, $\bb_n$ are the beamformers used by $Tx_n$, $\bh_0 \in \bbC^{N\times 1}$ is the channel between $Tx_0$ and $Rx_0$, $\bh_{0n}\in \bbC^{N\times 1}$ is the channel between $Tx_n$ and $Rx_0$, 
$d_n$ is the distance between $Tx_n$ and $Rx_0$, $x_0$ and $x_n$ are the data symbols transmitted from $Tx_0$ and $Tx_n$, respectively. For simplicity we assume that each receiver computes the beamforming vectors $\bb_n$ only depending on $\bh_n$, independent of the interferers' channels. 

The received signal at transmitter $Tx_0$ corresponding to 
the feedback by receiver $Rx_0$ over bandwidth $F_{RT}$ is 
\begin{eqnarray}
\label{rxfb}
\bw_0 = \sqrt{P_t}d^{-\alpha/2}\bg_0u_0 + 
\sum_{T_n \in \Phi \backslash \{T_0\}} 
\sqrt{P_t}d_{Rn}^{-\alpha/2} \bg_{0n}u_n,
\end{eqnarray} 
where $\bg_0 \in \bbC^{N\times 1}$ is the channel between $Rx_0$ and $Tx_0$, 
$\bg_{0n}\in \bbC^{N\times 1}$ is the channel between $Rx_n$ and $Tx_0$, 
$u_0$ and $u_n$ are the feedback 
signals transmitted by $Rx_0$ and $Rx_n$, respectively.

With genie-aided feedback, the optimal beamforming vector $\bb_n$ is known to be $\bb_n = \bh_n^*$. In practice, however, only a finite number of bits are available for 
feedback, and hence $\bb_n$ can be modeled as $\bb_n = \bh_n^* + \bee$, where $\bee$ is the additive error term which represents the uncertainty due to limited feedback. The quantization error $\bee$ degrades the signal power compared to genie aided feedback. With $B$ bits of feedback bits, the signal power \cite{Mondal2006}  is 
$|\bh_n|^2\left(1 -c_3\left(\frac{1}{B}\right)^{\frac{1}{N-1}}\right)$ ($c_3<1$ is a constant), compared to $|\bh_n|^2$ for genie aided feedback ($B=\infty$). Thus, the 
SIR for $Tx_0$ to $Rx_0$ communication with $B$ bits of feedback is 
\begin{equation}
\label{sirdata}SIR_{TR} = \frac{d^{-\alpha}|\bh^T_0|^2\left(1 - c_3\left(\frac{1}{B}\right)^{\frac{1}{N-1}}\right)}
{ \sum_{T_n \in \Phi \backslash \{T_0\}}d_{Tn}^{-\alpha}|\bh^T_{0n}\bb_0|^2 },
\end{equation}
and the corresponding  mutual information from $Tx_0$ to $Rx_0$ using bandwidth $F_{TR}$ is 
\[MI_{TR} \bydef F_{TR}\log \left(1+ SIR_{TR}\right) \  \text{bits/sec}.\]

Similarly, the SIR for the feedback link is $SIR_{RT} = \frac{d^{-\alpha}|\bg_0(1)|^2}
{\sum_{T_n \in \Phi \backslash \{T_0\}}d_{Rn}^{-\alpha}|\bg_{0n}(1)|^2 }$, and thus with bandwidth $F_{RT}$, the mutual information of the feedback link is 
\[MI_{RT} \bydef \left(F_{RT}\right)\log \left(1+ SIR_{RT} \right).\]

Similar to (\ref{psucstwoway}), we define the success probability as  the probability that communication in both directions is successful
simultaneously, i.e.
\[P_{success} = P\left(MI_{TR} > B_{TR}, MI_{RT}\ge B\right).\] Consequently, with $P_{success} = (1-\epsilon)$ the two-way transmission capacity is defined as 
\[C_{\epsilon} = \frac{\lambda  (1-\epsilon) B_{TR}}{F_{total}} \ \ \ \text{bits/sec/Hz/m$^2$}.\]

As stated before, in a two-way communication model, where the transmitter locations are modeled as a PPP, the interference received in both directions is correlated. Therefore, computing the success probability in closed form is a hard problem. To derive a meaningful insight into the dependence of bandwidth allocation, and  feedback bits on two-way transmission capacity, we derive a lower bound on the success probability using the FKG inequality as follows. 
\begin{thm} Accounting for feedback bandwidth, the two-way transmission capacity with beamforming is lower bounded by 
\[C_{\epsilon}\ge \frac{(1-\epsilon)\epsilon N^{\frac{2}{\alpha}}}{c_4[(\beta_1/\gamma)^{\frac{1}{\alpha}} + (\beta_3)^{\frac{1}{\alpha}}]}\frac{B_{TR}}{F_{total}} \ \ \text{bits/sec/Hz/m$^2$},\]
where $\gamma \bydef \left(1- c_3\left(\frac{1}{B}\right)^{\frac{1}{N-1}}\right)$, and 
\[c_4= \left( \left(1+\sum_{k=0}^{N-2} \frac{1}{(k+1)!}\prod_{\ell=0}^k\left(\ell-\frac{2}{\alpha}\right)\right)  \left(           \frac{2\pi}{\alpha}\sum_{k=0}^{N-1}{N\choose k} B\left(\frac{2}{\alpha}+k; N-\frac{2}{\alpha} +k \right)  \right) \right)^{-1}\] with $B(a,b) = \frac{\Gamma(a)\Gamma(b)}{\Gamma(a+b)}$.

\end{thm}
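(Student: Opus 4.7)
The plan is to follow the same two-step strategy as in Theorem \ref{thm:bounds}: first apply the FKG inequality to decouple the joint success probability into a product of marginal success probabilities in each direction, and then evaluate each marginal factor using the probability generating functional (PGFL) of the PPP. The new ingredient here is that the forward-link signal power is $|\bh_0^T\bb_0|^2\gamma$ with $|\bh_0|^2 \sim \chi^2_{2N}$ (from beamforming along $\bh_0^*$) and a quantization-loss factor $\gamma = 1-c_3B^{-1/(N-1)}$, rather than a unit-mean exponential, which changes the marginal forward calculation compared to Theorem \ref{thm:bounds}.

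First, I would write
$$P_{success} \;=\; P\!\left(|\bh_0|^2 \,>\, (\beta_1/\gamma)\,I_{TR},\ |\bg_0(1)|^2 \,>\, \beta_3\, I_{RT}\right),$$
where $\beta_3 := d^{\alpha}(2^{B/F_{RT}}-1)$ and $I_{TR}, I_{RT}$ are the interference sums defined as in Section \ref{sec:twowaytc}. Both events are decreasing in the underlying PPP configuration (adding an active transmitter can only increase $I_{TR}$ and $I_{RT}$), so by Example \ref{exmdec} and Lemma \ref{lemfkg}(b),
$$P_{success} \;\ge\; P\!\left(SIR_{TR} > \beta_1/\gamma\right)\cdot P\!\left(SIR_{RT} > \beta_3\right).$$
Crucially, because the beamformers $\bb_n$ are unit-norm and chosen independently of the cross channels $\bh_{0n}$ (since $\bb_n$ depends only on $\bh_n$), each interference term $|\bh_{0n}^T\bb_n|^2$ remains a unit-mean exponential, so the Laplace transforms of $I_{TR}$ and $I_{RT}$ are the standard Poisson shot-noise expressions.

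Next I would evaluate the two marginals. For the feedback link the calculation is identical to the lower bound step $(c)$ in Theorem \ref{thm:bounds}: conditioning on $|\bg_0(1)|^2$, taking the Laplace transform of $I_{RT}$, and then applying the PGFL of the PPP yields a factor whose logarithm is proportional to $\lambda\beta_3^{2/\alpha}$. For the forward link, since $|\bh_0|^2 \sim \chi^2_{2N}$ has CCDF $e^{-x}\sum_{k=0}^{N-1}x^k/k!$, the same conditioning procedure reduces the success probability to a finite sum of $k$-th moments of $e^{-sI_{TR}}$ evaluated via derivatives of the PGFL; computing those derivatives produces the combinatorial constants $\binom{N}{k}$ and the Beta-function terms $B(2/\alpha+k, N-2/\alpha+k)$ visible in the definition of $c_4$, along with the overall $N^{2/\alpha}$ beamforming gain in the numerator. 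Multiplying the two marginal bounds, setting $P_{success}\ge 1-\epsilon$, and using the small-outage approximation $1-e^{-x}\approx x$ (equivalently $-\ln(1-\epsilon)\approx \epsilon$) to solve explicitly for $\lambda$ yields $\lambda \gtrsim \epsilon\, N^{2/\alpha}/\bigl(c_4[(\beta_1/\gamma)^{1/\alpha}+\beta_3^{1/\alpha}]\bigr)$; substituting this into Definition \ref{deftc} gives the claimed lower bound on $C_\epsilon$.

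The main obstacle is the forward-link marginal: unlike Theorem \ref{thm:bounds}, where Rayleigh fading made $P(SIR>\beta)$ collapse to a single Laplace transform $\bbE e^{-\beta I}$, here the chi-squared signal forces one to interchange the finite sum in the CCDF with the expectation over the PPP and to bookkeep the resulting Gamma/Beta constants that produce the $A(N) := 1+\sum_{k=0}^{N-2}\frac{1}{(k+1)!}\prod_{\ell=0}^k(\ell-2/\alpha)$ factor in $c_4$. Once this bookkeeping is done cleanly, the remainder of the argument (FKG decoupling, PGFL integration for the feedback link, and translating the success-probability bound into a density bound via Definition \ref{deftc}) mirrors the proof of Theorem \ref{thm:bounds} almost verbatim.
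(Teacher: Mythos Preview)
Your approach is essentially the same as the paper's: FKG decoupling of the joint success event into the product of the two marginals, followed by evaluation of each marginal and inversion to a density bound. The only difference is that where you plan to compute the forward and feedback marginals from scratch via the $\chi^2_{2N}$ CCDF and PGFL derivatives, the paper simply invokes Theorem~3 of \cite{Hunter2008} as a black box for both marginals (obtaining the linear-in-$\lambda$ form $1-c_4\lambda(\beta/\gamma)^{2/\alpha}N^{-2/\alpha}$ directly, so no small-outage linearization is needed); your derivation would reproduce exactly that cited result, so the two routes coincide.
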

\begin{proof}
\[P_{success} = P\left(MI_{TR} > B_{TR}, MI_{RT}\ge B\right).\] 
Similar to Example \ref{exmdecevent}, the success events in two directions $\{MI_{TR} > B_{TR}\}$, and $\{MI_{RT} > B\}$, respectively, are decreasing events. Thus, invoking Lemma \ref{lemfkg},
\[P_{success} \ge P\left(MI_{TR} > B_{TR}\right)P\left( MI_{RT}\ge B\right).\]
By definition, 
\begin{eqnarray*}
P\left(MI_{TR} > B_{TR}\right) &=& P\left(F_{TR}\log \left(1+ SIR_{TR}
\right)> B_{TR}\right), \\
&\stackrel{(a)}{=}& P\left(SIR_{TR} >\beta_1\right),\\
&\stackrel{(b)}{=}& P\left(\frac{d_0^{-\alpha}|\bh^T_0|^2\left(1 - |\bh_n|^2\left(\frac{1}{B}\right)^{\frac{1}{N-1}}\right)}
{ \sum_{T_n \in \Phi \backslash \{T_0\}}d_n^{-\alpha}|\bh^T_{0n}\bb_0|^2 }>\beta_1\right),\\
&\stackrel{(c)}{\ge}&P\left(\frac{d_0^{-\alpha}|\bh^T_0|^2 }{ \sum_{T_n \in \Phi \backslash \{T_0\}}d_n^{-\alpha}|\bh^T_{0n}\bb_0|^2 }>\beta_1/\gamma\right),\\
&\stackrel{(d)}{=}& 1-c_4\lambda(\beta_1/\gamma)^{\frac{2}{\alpha}}N^{\frac{-2}{\alpha}},
\end{eqnarray*}
where $(a)$ follows from the definition of $\beta_1$, $(b)$ follows by substituting for $SIR_{TR}$ (\ref{sirdata}), $(c)$ follows by defining $\gamma \bydef \left(1- c_3\left(\frac{1}{B}\right)^{\frac{1}{N-1}}\right)$, and $(d)$ follows from Theorem $3$ \cite{Hunter2008}.

Directly applying Theorem $3$ \cite{Hunter2008}, 
$P\left( MI_{RT}\ge B\right) = 1- c\lambda(\beta_3)^{\frac{2}{\alpha}}N^{\frac{-2}{\alpha}}$, 
where $\beta_3 = d^{\alpha}\left(2^{\frac{B}{F_{RT}}}\right)$

Thus, 
$P_{success} \ge 1- c_4\lambda N^{\frac{-2}{\alpha}}\left[ (\beta_1/\gamma)^{\frac{2}{\alpha}} +(\beta_3)^{\frac{2}{\alpha}}\right]$.
Then,
\[C_{\epsilon}\ge \frac{(1-\epsilon)\epsilon N^{\frac{2}{\alpha}}}{c_4[(\beta_1/\gamma)^{\frac{1}{\alpha}} + (\beta_3)^{\frac{1}{\alpha}}]}\frac{B_{TR}}{F_{total}} \ \ \text{bits/sec/Hz/m$^2$}.\]
\end{proof}
{\it Discussion:} In this section we derived a lower bound on the two-way transmission capacity when the transmitter uses beamforming with limited feedback, as a function of the bandwidth allocated in two directions, and the number of feedback bits. 
Note that as $B$ (the number of feedback bits) increases, the two-way transmission capacity increases as $B^{\frac{1}{(N-1)\alpha}}$ due to the improvement in signal strength, however, decreases as $2^{\frac{-B}{\alpha}}$ because of the stringent requirement of SIR on the feedback link to be more than $\beta_3$. Our result quantifies the degradation due to practical limited feedback in two-way transmission capacity with beamforming, compared to assuming a genie aided feedback \cite{Hunter2008}. The feedback requirement not only decreases the available bandwidth for transmitter to receiver communication, but also degrades the overall  performance due to the successful reception requirement of the feedback bits. 

Similar to Section \ref{sec:balloc}, for a fixed value of $B$ and $B_{TR}$, the optimal bandwidth allocation $F_{TR}$ that maximizes the two-way transmission capacity upper bound can be computed using Theorem \ref{thm:SpectrumAllocation}, since here again the optimization problem is convex. For a fixed value of $F_{TR}$ and $B_{TR}$, finding the optimal $B$ is slightly complicated since the upper bound is not a convex function of $B$, however, the problem is a single variable problem and can be solved easily by using techniques like bisection.

\section{Numerical  Results}
\label{sec::NumericalSimulationResults} In this section we present some numerical results on the two-way transmission capacity. 
We adopt the simulation methodology for one-way networks
presented in \cite{WeberKam05} and consider $d = 5$m, and  $\alpha = 4$.
\subsection{General Two-way Communication}\label{subsec:simGeneralTwoway}
\textbf{Tightness of the proposed bounds: }
In this experiment, we consider $B_{\rm TR} = 1.028$ kbits, $B_{\rm RT} = 0.03$ kbits, $F_{\rm TR} = 0.99$ MHz, and $F_{\rm RT} = F_{\rm total} - F_{\rm TR} = 0.01$ MHz. Fig. \ref{fig:GenBoundTightness} shows the curves for the simulated result and the bounds derived in Theorem \ref{thm:bounds} on the two-way transmission capacity as functions of the outage probability requirement.  Moreover, note that the transmission capacity decreases at very high outage probability ($\epsilon$), since the transmission capacity expressions are proportional to $-(1-\epsilon)\log(1-\epsilon)$. Intuitively, as the outage probability $\epsilon$ approaches towards 1, a high density of links is allowed in a unit area, however, most of the links fail; therefore, the amount of successfully received information actually decreases.

\textbf{One-way versus two-way transmission capacity:} Requiring that transmissions be successful in both directions, the two-way transmission capacity is less than the one-way transmission capacity. To quantify the loss we plot the two-way transmission capacity in comparison with the one-way transmission capacity for the same total bandwidth $F_{\rm total}$ and total data rates $B_{\rm total} = B_{\rm TR}+B_{\rm RT}$. In particular, for the results shown in Fig. \ref{fig:GenOnevsTwo}, we set $B_{\rm TR}=1.024$ kbits, $B_{\rm RT} = 0.256$ kbits, $F_{\rm TR} = 0.8$ MHz, and $F_{RT} = 0.2$ MHz. The simulation results show that at the outage requirement of $10\%$, the two-way transmission capacity is half the one-way transmission capacity.

\textbf{Effect of bandwidth allocation:} To highlight the effect of bandwidth allocation on the two-way transmission capacity we plot the transmission capacity as a function of $F_{\rm TR}$ in Fig. \ref{fig:GenBWAllocation} assuming the total bandwidth is $F_{\rm total} = 1$ MHz. For the scenario of symmetric traffic, we set the data requirements in two directions equal to 1 kbits, i.e., $B_{\rm TR} = B_{\rm RT} = 1.024$ kbits. In this case, we notice that the proportional allocation method is optimal. For asymmetric traffic, we consider $B_{\rm TR} = 1.024$ Mbits and $B_{\rm RT} = 0.056$ kbits. From Fig. \ref{fig:GenBWAllocation}, note that the optimal bandwidth allocation (Theorem \ref{thm:SpectrumAllocation}) provides a gain of $36\%$ over the proportional allocation.

\subsection{Feedback-Based Communication}\label{subsec:simFeedback}
To quantify the effect of feedback on the transmission capacity we compare the transmission capacity of a feedback-based network with the corresponding one-way network with the genie-aided beamforming \cite{Hunter2008} with  $N=3$ in Fig. \ref{fig:FBOnevsTwo}. 
We use $B_{\rm TR} = 1.024$ kbits, $B_{\rm RT} = 0.056$ kbits, feedback bits $B=2$, $F_{\rm TR} = 0.94$ MHz, and $F_{\rm RT} = F_{\rm total} - F_{\rm TR} = 0.06$ MHz and assume that the transmitters employ Grassmannian limited feedback beamforming for transmission \cite{Love2003}. Moreover, of the $B_{\rm RT}=0.056$ kbits (or 56 bits) in the reverse direction, $B$ bits are used for carrying the codeword index while the other bits are used for MAC header. 

\textbf{Tightness of the proposed lower bound: } In this experiment, we set $N = 3$ antennas and $B = 2$ bits. Fig. \ref{fig:FBBoundTightness} presents the simulated results for a genie-aided beamforming network and the limited-feedback beamforming network as well as the computed lower bound.

\vspace{-.2in}
\section{Conclusions}\label{sec:Conclusions}
In this paper we generalized the concept of transmission capacity to incorporate two-way
communication in wireless ad-hoc networks. The two-way transmission capacity is able to
capture the requirement of successful transmissions in both directions and the impact of duplexing
techniques. The two-way success requirement is shown to reduce the transmission capacity significantly compared to the corresponding one-way transmission capacity.
This observation raised the question of finding the network with the maximum
two-way transmission capacity among the two-way networks with the same total bandwidth given fixed desired rates in two directions. We addressed the question by providing the optimal solution for bidirectional spectrum allocation  to maximize the two-way transmission capacity. The optimal solutions were determined in terms of the
path-loss exponent, desired rates, and total bandwidth available.

As an application of the two-way transmission capacity framework, we also quantified the effect of practical limited channel feedback on the two-way transmission capacity with transmit beamforming. We showed that accounting for the bandwidth required for feedback, and the successful reception of the feedback bits, the transmission capacity is significantly reduced compared to the genie aided feedback.

\bibliographystyle{../IEEEtran}
\bibliography{../IEEEabrv,../Research}

\begin{figure}
\centering
\includegraphics[width=3.5in]{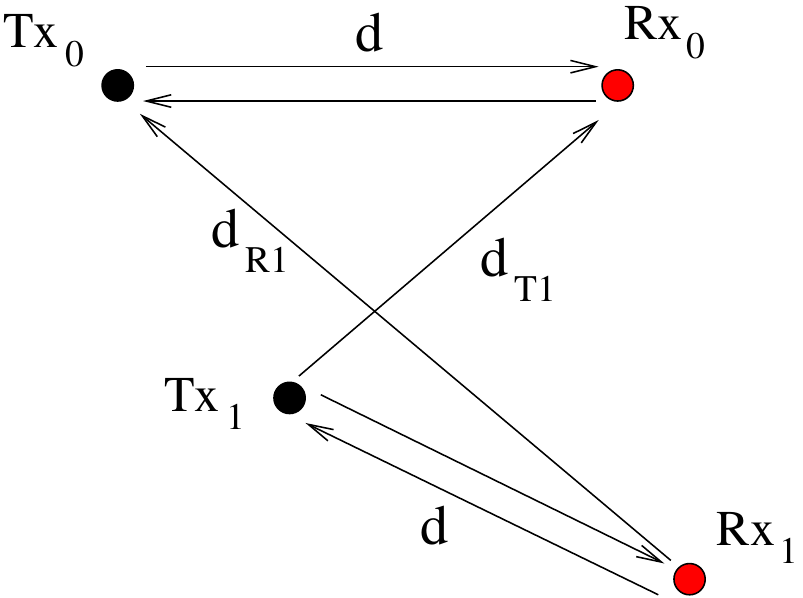}
\caption{Schematic of two-way communication with two pairs of nodes.}
\label{intfindp}
\end{figure}
\begin{figure}[!h]
\begin{centering}
\includegraphics[width=3.2in]{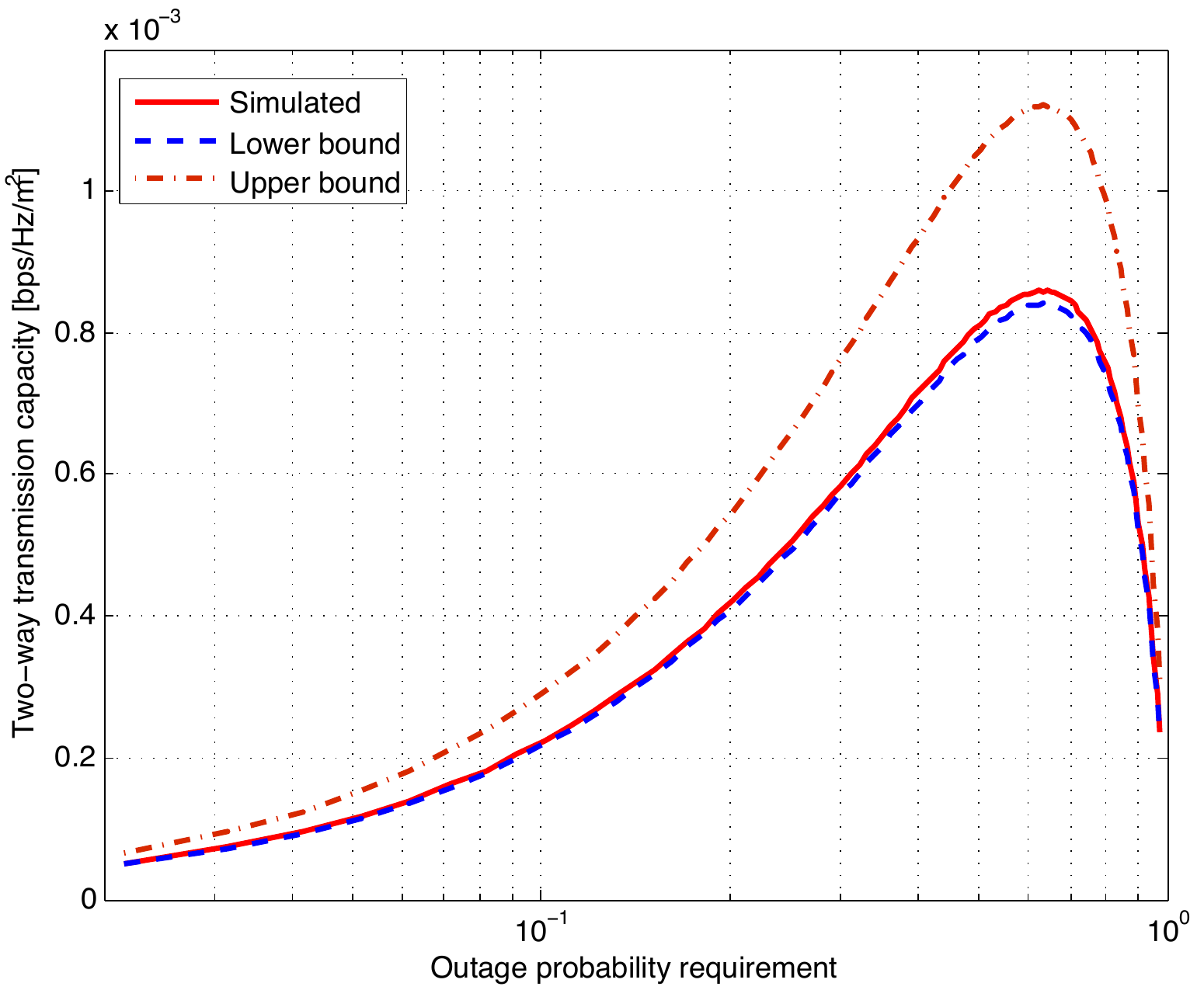}
\caption{Tightness of the proposed bounds on the transmission capacity of general two-way networks.}
\label{fig:GenBoundTightness}
\end{centering}
\end{figure}
\begin{figure}[!h]
\begin{centering}
\includegraphics[width=3.2in]{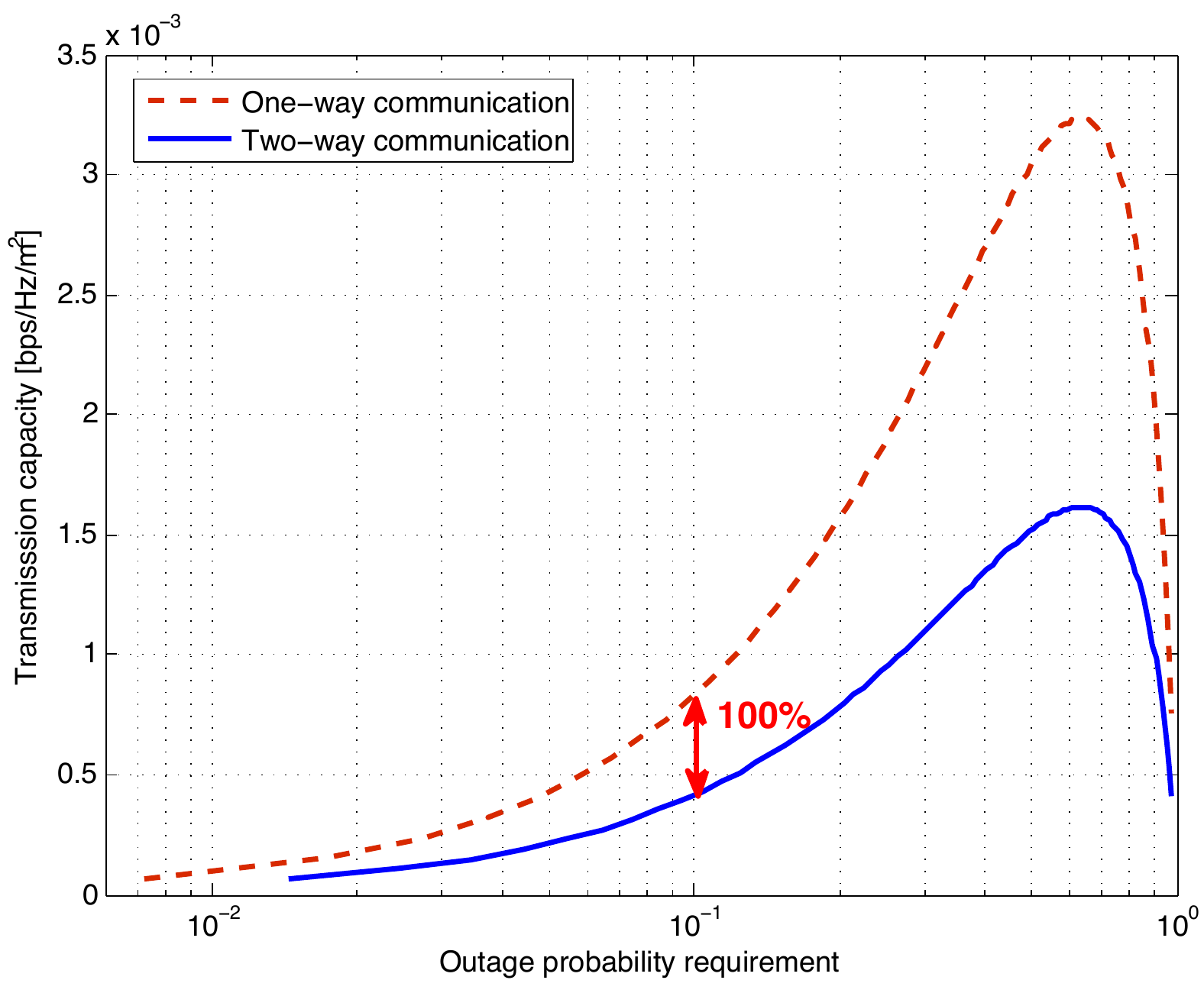}
\caption{Comparison of the one-way transmission capacity and the general two-way transmission capacity.}
\label{fig:GenOnevsTwo}
\end{centering}
\end{figure}

\begin{figure}[!h]
\begin{centering}
\includegraphics[width=3.2in]{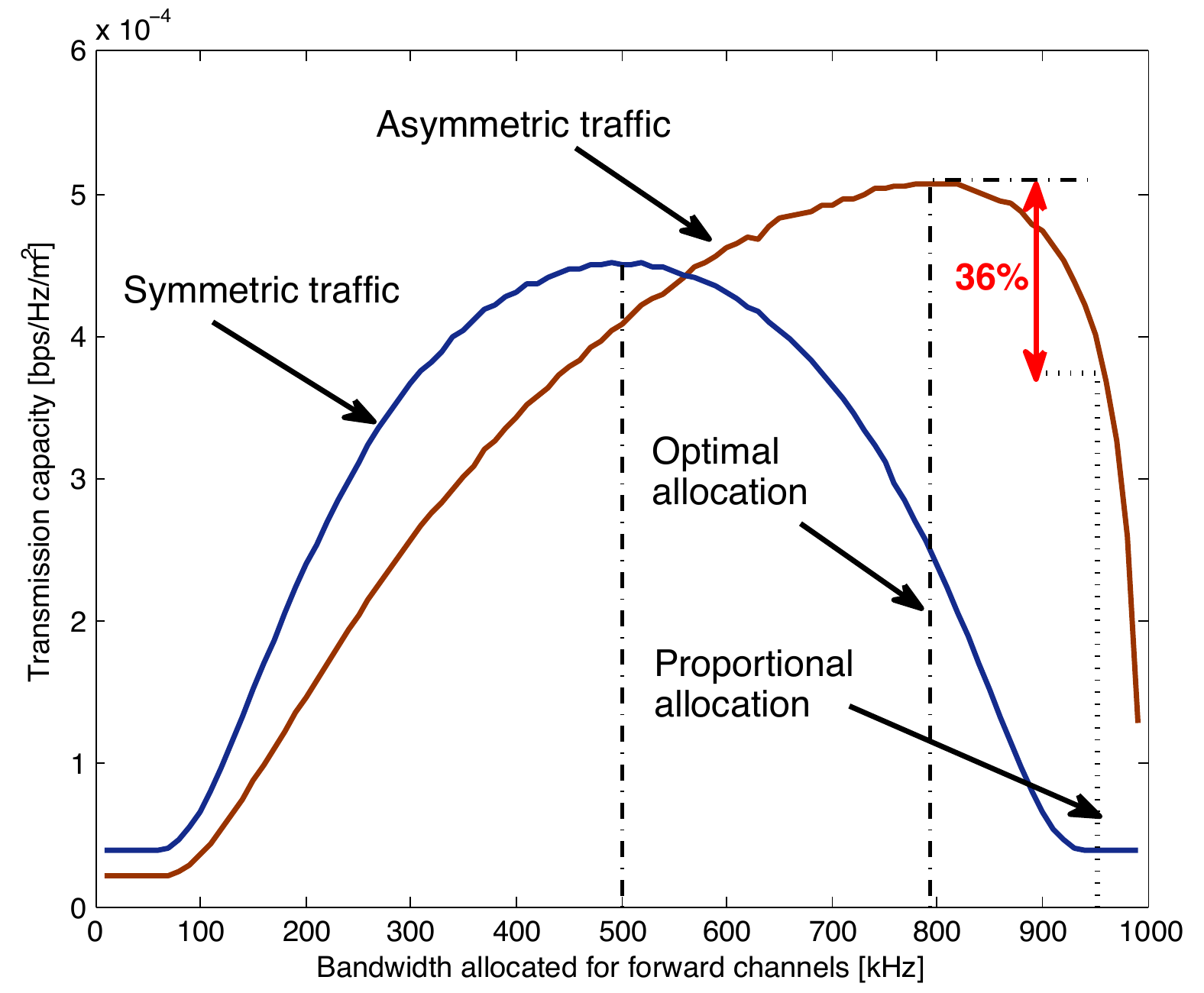}
\caption{Two-way transmission capacity as a function of bandwidth allocation. For symmetric traffic, the proportional allocation method is optimal, while for asymmetric traffic the optimal allocation provides a large gain over the proportional allocation.}
\label{fig:GenBWAllocation}
\end{centering}
\end{figure}

\begin{figure}[!h]
\begin{centering}
\includegraphics[width=3.2in]{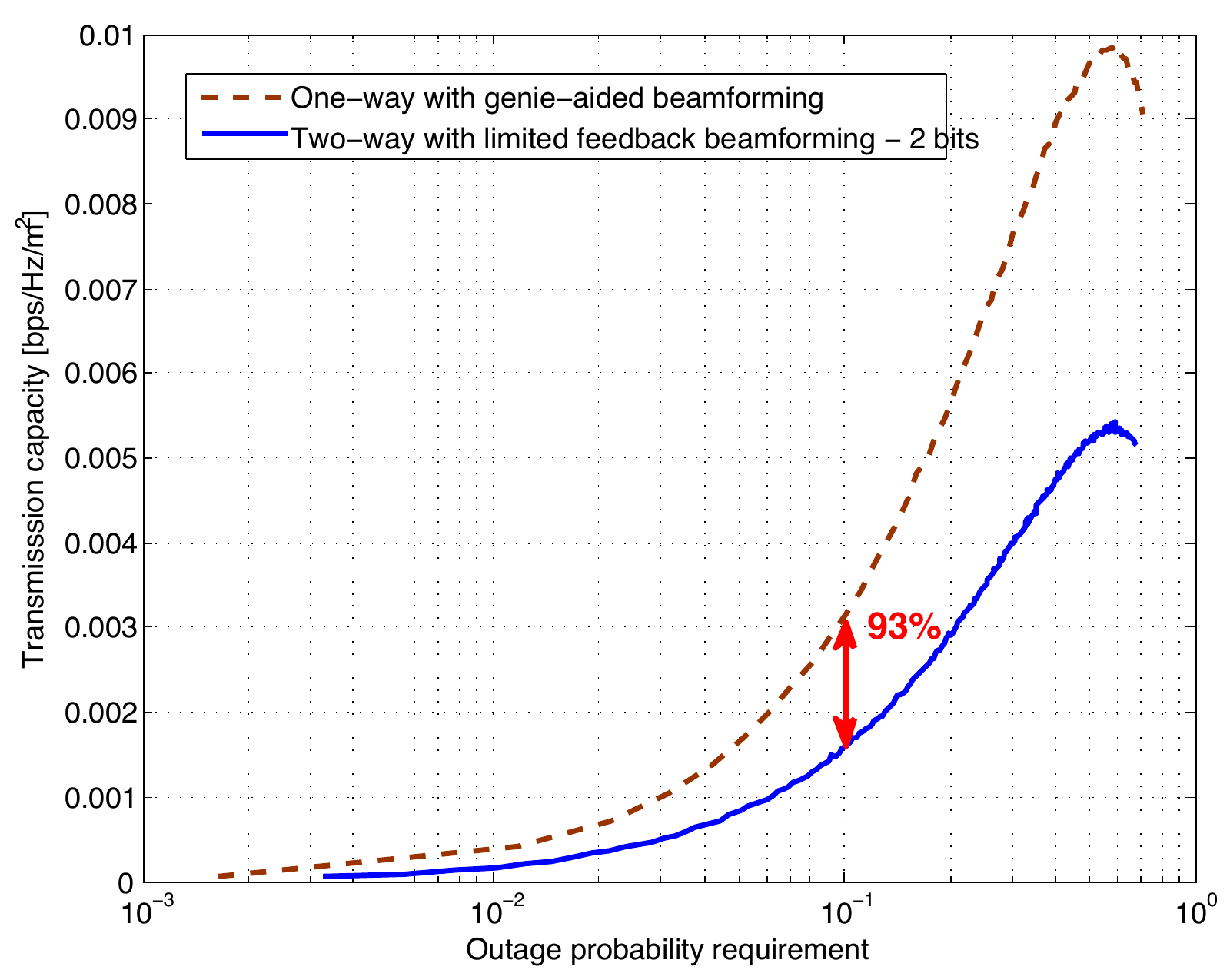}
\caption{Comparison of the transmission capacity of a feedback-based network with that of the corresponding one-way network.}
\label{fig:FBOnevsTwo}
\end{centering}
\end{figure}

\begin{figure}[!h]
\begin{centering}
\includegraphics[width=3.2in]{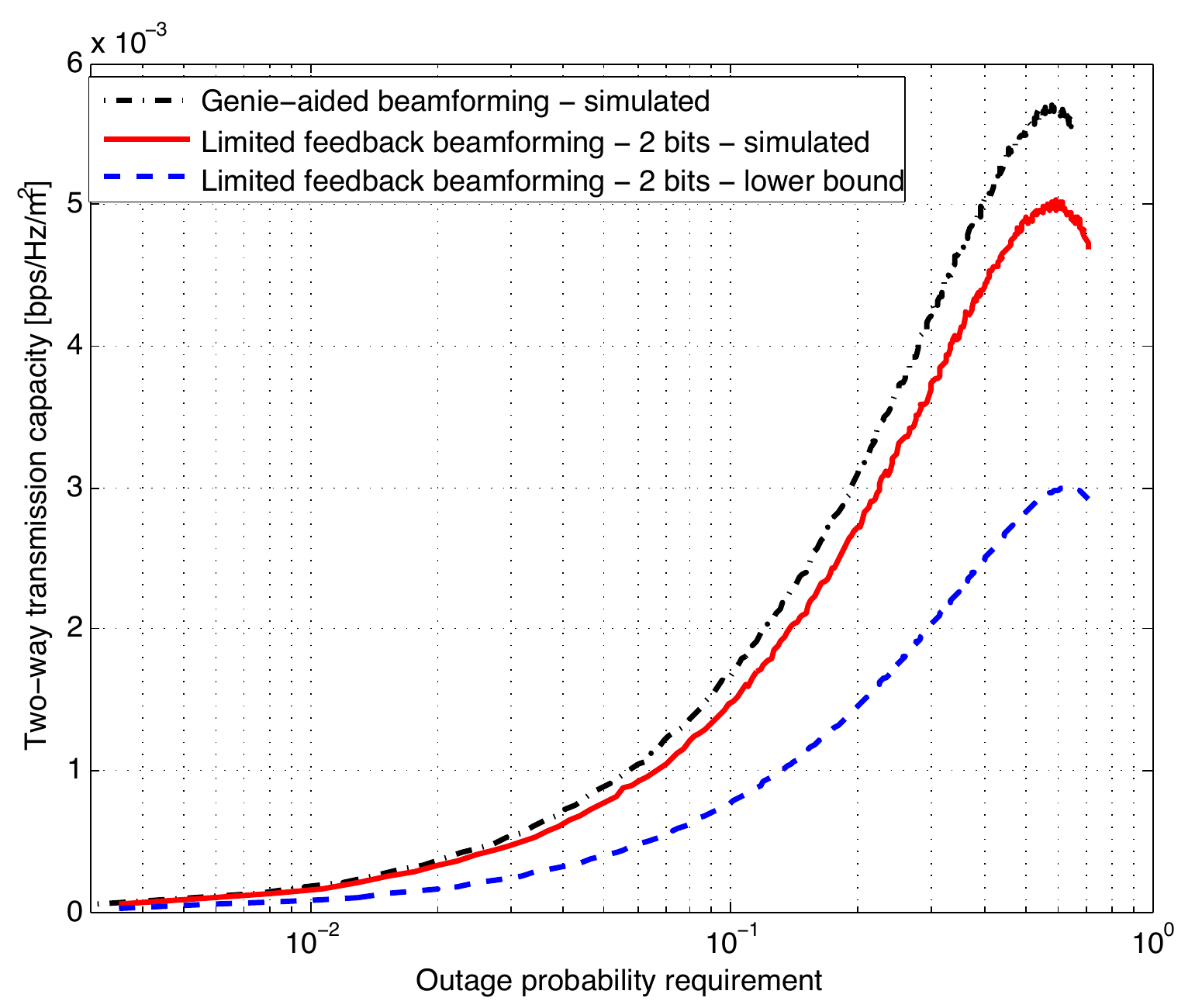}
\caption{Tightness of the proposed lower bound on the transmission capacity of feedback-based networks.}
\label{fig:FBBoundTightness}
\end{centering}
\end{figure}

\end{document}